\newif\ifLETTER
\newtheorem{lemma}{Lemma}
\newtheorem{corollary}{Corollary}
\newtheorem{proposition}{Proposition}
\def\Ints{\mathbb{Z}}
\def\P{\mathbb{P}}
\def\E{\mathbb{E}}
\def\BB{\mathcal{B}}
\def\PP{\mathcal{P}}
\renewcommand{\phi}{\varphi}
\renewcommand{\epsilon}{\varepsilon}
\newcommand{\eqdist}{\stackrel{\text{\rm (d)}}{=}}
\definecolor{mygray}{gray}{0.4}
\definecolor{deeppink}{RGB}{255,20,147}
\definecolor{mygreen}{rgb}{0.0, 0.75, 0.0}
\definecolor{myred}{rgb}{0.768, 0.09, 0.09}
\long\def\symbolfootnote[#1]#2{\begingroup
\def\thefootnote{\fnsymbol{footnote}}\footnote[#1]{#2}\endgroup}
\newcounter{para}
\definecolor{eqcol}{RGB}{255,10,130}
\def\uu{\mathfrak u}
\begin{document}

\title{Age of information distribution under
dynamic service preemption
}

\author{
\begin{tabular}{ccc}
\sc George Kesidis &\sc   Takis Konstantopoulos & \sc Michael A.\ Zazanis\\
\small School of EECS &\small 
\small Department of Mathematics &\small  Department of Statistics\\ 
\small Pennsylvania State University&\small  University of Liverpool &\small  Athens University of Economics \& Business\\
\small \href{mailto:gik2@psu.edu}{\tt gik2@psu.edu} 
&\small 
\href{mailto:takiskonst@gmail.com}{\tt takiskonst@gmail.com} 
&\small 
\href{mailto:zazanis@aueb.gr}{\tt zazanis@aueb.gr}
\end{tabular}
}

\ifLETTER
\else
\date{\small \today}
\fi
\maketitle

\begin{abstract}
Age of Information (AoI) has emerged as an important
quality-of-service measure for applications
that prioritize delivery of the freshest information,
e.g., virtual or augmented reality over mobile devices
and wireless sensor networks used in the control of
cyber-physical systems.
We derive the Laplace transform of
the stationary AoI 
for the M/GI/1/2 system with a ``dynamic" service
preemption and pushout policy depending on the existing
service time of the in-service message.  
Thus, our system generalizes both the static 
M/GI/1/2 queue-pushout system without service preemption
and the M/GI/1/1 bufferless system with service preemption
- two systems considered to provide very good AoI performance.
Based on our analysis,
for a service-time distribution that is
a mixture of deterministic and exponential, we 
numerically show that
the dynamic policy has lower mean AoI than that of these
two static policies and also that of the 
well studied M/GI/1/1 blocking system.
\end{abstract}

\section{Introduction}

Consider a queueing system transmitting messages, particularly
where the service-time distribution models access and transmission
delays over a wireless channel.
If $A(t)$ is the maximum arrival time of messages which are completely 
served before time $t$, then the quantity $\alpha(t)=t-A(t)$
is called
``age of information" (AoI), see 
\cite{Sun19,Yates21} and references therein. 
The reason is simple: in several applications it is the freshness of information
that is important rather than the correct transmission of all messages.
Examples include  virtual reality and online gaming on mobile devices,
semi or fully autonomous vehicles, 
and wireless sensors of power systems and other ``cyber physical" systems.

Mean AoI results for stationary
systems are obtained in, e.g., \cite{Yates12,Costa16,Kosta17,KKZ21b}.
However, in such latency critical applications,
bounds on the tail of the AoI {\em distribution},
e.g., \cite{Inoue19,KKZ19,KKZ21a},
(not just its mean) need to be met.
That is,  for a threshold $\nu$ and tolerance $\epsilon$
such applications require $\P(\alpha>\nu)<\epsilon.$
(Note that, abusing notation, $\alpha$ will typicall stand for a random variable
that is distributed according to $\alpha(t)$ for some, and hence all, $t$,
when the process is stationary.)

Service preemption, last-in-first-out (LIFO) queueing, or
queue push-out  is often not practically feasible,
but blocking/dropping of arriving messages typically is.
The stationary distribution of AoI under blocking or
push-out policies has been derived for different queueing models,
including different buffer sizes and under preemptive or non-preemptive
service, in \cite{Inoue19,KKZ19,Champati19,KKZ21a,KKZ21b}.
Typically, renewal models of interarrival time and service time have 
been considered in prior work.

In particular,
$\PP_1$  (a system with at most one message and  preemptive push-out service)
has pathwise equal AoI as the infinite-buffer LIFO 
system with service preemption.
Similarly, $\PP_2$ (a system with at most two messages and
non-preemptive service with push-out of the queued message) 
has pathwise smaller AoI than $\BB_2$
(a blocking, non-preemptive system with at most two messages).
When service times are deterministic, $\PP_2$ was
shown to have lower mean AoI than $\PP_1$, 
though the
converse is true when service times are exponential
\cite{Kosta17}.
For deterministic service, $\BB_1$  (non-preemptive service with
at most one message in the system) has lower mean AoI
than $\PP_2$ for sufficiently large traffic loads under
deterministic service. Ignoring practical constsraints on queueing
policy,  prior work has identified the policies
$\PP_1$, $\PP_2$ or $\BB_1$ as performing optimally by some AoI measure,
particularly minimizing the mean AoI in steady-state.
One can show $\BB_n$ has pathwise smaller AoI than $\BB_{n+1}$
for $n\geq 2$, however the same statement has not been proved
for LIFO $\PP_n$ policies, e.g., \cite{KKZ21b}.

In this paper, we derive the AoI distribution 
for the stationary M/GI/1/2 system with a ``dynamic" service
preemption or queue-pushout policy depending on the  amount of
service received so far by the in-service message.   As such, it is
a causal queueing policy that 
generalizes both $\PP_1$ and $\PP_2$.
The approach conditions on a well-known Markov-renewal
embedding 
\cite{Cinlar75,Kleinrock75} 
which can be employed to
compute the AoI distribution for other such queueing systems
\cite{KKZ21a}, particularly
LIFO systems with larger buffer sizes 
\cite{KKZ21b} and systems under  the GI/M model
 (renewal arrivals and exponential service times).

We numerically show that for a particular serive-time distributions in the 
M/GI case that
the dynamic policy has lower mean AoI than 
$\PP_1$, $\PP_2$ or $\BB_1$.
Our aim is to draw attention to an open problem: 
For a given queueing system model, what queuing and service policies
minimize an AoI-based quality-of-service measure?
Since for exponential service times, the memoryless property
imples that the bufferless pushout system $\PP_1$ has pathwise minimal
AoI,
we specifically
exclude consideration of memoryless service times in the following.

This paper is organized as follows.
In section 2, we define our queueing system and give some preliminary results.
In section 3, we derive the AoI distribution of the stationary queuing system
under consideration.
Numerical comparisons with other systems based on mean AoI
are made in Section 4.
Finally, we summarize and discuss future work in Section 5.

\section{System Definition and Preliminaries}  \label{sec:prelim}

There is a buffer consisting of two cells. Cell 1 is
reserved for the message receiving service
and cell 2 for the message waiting. 
If there is a message in cell $1$ at time $t$ we let $\uu(t)$ 
be the amount of service received by this message up to $t$;
if the system is empty, we set $\uu(t)=0$.
Fix $\theta \in [0, \infty]$.
If a message arrives at time $t$ and $\uu(t) \le \theta$ then the arriving message 
pushes-out the message in cell 1 and takes its place.
Otherwise, if $\uu(t) > \theta$ then the arriving message 
occupies cell 2 (pushing out the message sitting there, if any).
We call this system $\PP_{2,\theta}$.
Note that $\PP_{2,0}$ and $\PP_{2,\infty}$ make sense too and that
the collection
$\PP_{2,\theta}$, $0\le \theta \le \infty$, is a ``homotopy'' between
these two systems. In fact, in the terminology of \cite{KKZ21a,KKZ19}, 
$\PP_{2,0} = \PP_2$ and $\PP_{2,\infty}=\PP_1$. 
(In the latter system, cell 2 will never be occupied, so, effectively, it
has buffer of size 1.)

Thus, a contiguous service interval  that ends with a message departure is  a
sequence of preempted message-service periods followed
by a completed/successful message-service period.
Prior to the successful message-service period,
there are no queued messages waiting for service.
During the successful message-service period, 
any arriving messages obviously fail to preempt
and, under queue pushout, the {\em last}
 such arriving message is queued and
begins service once the successful message-service
period ends.

We assume that the arrival process is Poisson with rate $\lambda>0$ and that
messages have i.i.d.\ service times (independent of arrivals) distributed like
a random variable $\sigma$ such that $\sigma  >0$ a.s.\ with expectation
$1/\mu< \infty$. We let $G$ be the distribution function of $\sigma$
and set $\hat G(s) = \E e^{-s \sigma}$.
Under these assumptions, we will assume that the system $\PP_{2,\theta}$ is
in steady-state (taking into account that there is a unique such steady-state, the
reasons for which are classical and will not be discussed here).

Let $S_n$ denote successful departure epochs.
Let $K_n$ be the number of mesages in the system immediately
after  $S_n$. 
Given $K_{-1}=0$, consider Figure \ref{fig:0}  at bottom.
\begin{figure}[h]
\begin{center}
\includegraphics[width=2in]{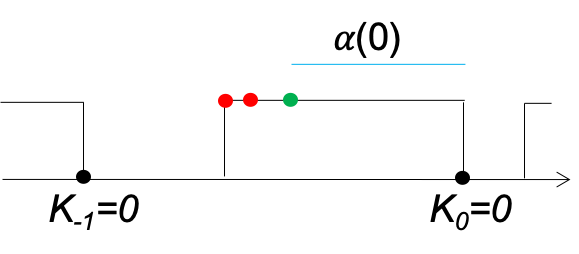}
\hspace*{5mm}
\includegraphics[width=2in]{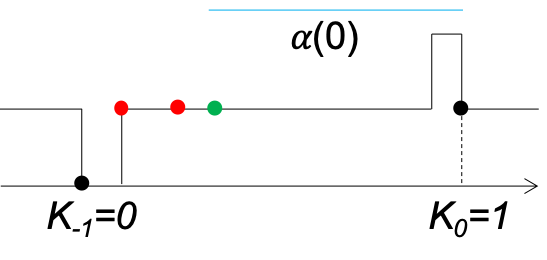}
\captionof{figure}{Number of messages in the system (black line)
when $K_{-1}=0,K_0=0$ 
(top) 
and  $K_{-1}=0,K_0=1$ 
(bottom).
A black dot indicates a (successful) message departure. 
A green dot indicates an arrival 
of a message that will
successfully depart. 
A red dot indicates an arrival 
of an unsuccessful message.
If the fate of an arrival is not indicated in the figure, then
it has no indicating dot.
The length of the blue line is $\alpha(S_0)$.}
\label{fig:0}
\end{center}
\end{figure}

A message service period is successful with probability  
\begin{align*}
1-q:=\P(\tau>\theta \wedge \sigma) 
& = \int_0^\infty e^{-\lambda (\theta\wedge s)} dG(s)\\
& = (1-G(\theta))e^{-\lambda \theta} +\int_0^\theta e^{-\lambda s} dG(s).
\end{align*}
So, considering the successful service period which concludes at $S_0$:
\begin{align*}
\P(K_0=0|K_{-1}=0)
 & = \P(\tau>\sigma~|~\tau>\theta\wedge\sigma)\\
 & = (1-q)^{-1}\hat{G}(\lambda)\\
& =1-\P(K_1=1|K_{-1}=0).
\end{align*}

Given $K_{-1}=1$, consider Figure \ref{fig:11}.
\begin{figure}[h]
\begin{center}
\includegraphics[width=1.6in]{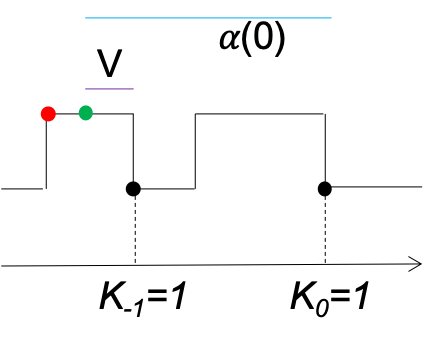}
\hspace*{5mm}
\includegraphics[width=2in]{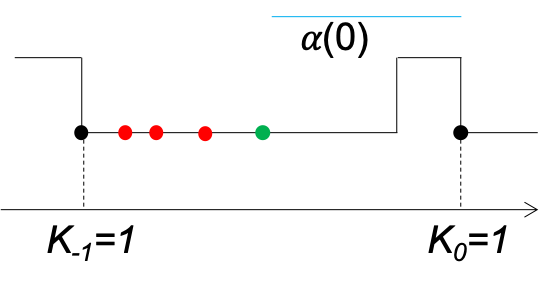}
\captionof{figure}{$K_{-1}=1,K_0=1$ cases.}
\label{fig:11}.
\end{center}
\end{figure}

Use the memoryless property of interarivals
to similarly obtain
\begin{align*}
\P(K_0=0|K_{-1}=1) & =
(1-q)^{-1}\hat{G}(\lambda)\\
	& =1-\P(K_0=1|K_{-1}=1)
\end{align*}
So, $K_n$  is an i.i.d.\ Bernoulli sequence with
\begin{align*}
	p_0 & := \P(K_n=0)  =
(1-q)^{-1}\hat{G}(\lambda) \\
	& = 1-\P(K_n=1) =:  1-p_1.
\end{align*}

The queueing process over consecutive intervals
$[S_{i-1},S_i)$ and
$[S_i,S_{i+1})$ are conditionally independent given $K_i$.
Thus, $\{S_i,K_i\}_{i\in\Ints}$ is 
Markov-renewal with renewal times $S_i$ and
the 
queueing process is semi-Markov 
\cite{Cinlar75,Kleinrock75}.
In particular,  $\alpha(S_i)$ and $S_{i+1}-S_i$ are conditionally independent
given $K_i$.

\section{Stationary AoI Distribution}

\begin{proposition}\label{prop:main}
The Laplace transform of the 
stationary AoI distribution  is
\ifLETTER
$\E e^{-s \alpha(0)}  = $
	{\small
\begin{align}\label{palm}
	\frac{
\sum_{i=0}^1 \E^0(e^{-s\alpha(0)}|K_0=i)
\Big(1- \E^0(e^{-s(S_0-S_{-1})}|K_{-1}=i) \Big)p_i}
	{s\E^0(S_1-S_0)},
\end{align}
	}
\else
\begin{align}\label{palm}
\E e^{-s \alpha(0)} & = 
	\frac{
\sum_{i=0}^1 \E^0(e^{-s\alpha(0)}|K_0=i)
\Big(1- \E^0(e^{-s(S_0-S_{-1})}|K_{-1}=i) \Big)p_i}
	{s\E^0(S_1-S_0)},
\end{align}
\fi
where $\P^0$ and $\E^0$ are respectively probability and expectation given $S_0=0$.
\end{proposition}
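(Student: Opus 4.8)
The plan is to pass from the departure-stationary (Palm) description, in which the semi-Markov structure lives, to the time-stationary law of $\alpha(0)$ via the Palm inversion formula. Since $\{S_i\}$ is the stationary point process of successful departures, with intensity $1/\E^0(S_1-S_0)$, inversion gives
\[
\E e^{-s\alpha(0)} = \frac{1}{\E^0(S_1-S_0)}\,\E^0\!\int_0^{S_1-S_0} e^{-s\alpha(t)}\,dt ,
\]
so the whole problem reduces to evaluating the reward $\E^0\!\int_0^{S_1-S_0} e^{-s\alpha(t)}\,dt$ accumulated over one inter-departure cycle, where $S_0=0$ under $\P^0$.

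For the cycle integral I would use the sawtooth geometry of the AoI. No successful delivery occurs strictly inside $(S_0,S_1)$, so $A(t)$ is constant there and $\alpha(t)=\alpha(S_0)+(t-S_0)$ rises at unit rate from its post-departure value $\alpha(S_0)$ (the length of the blue segment in Figure \ref{fig:0}). Integrating the resulting elementary exponential,
\[
\int_0^{S_1-S_0} e^{-s\alpha(t)}\,dt = e^{-s\alpha(0)}\,\frac{1-e^{-s(S_1-S_0)}}{s},
\]
using $\alpha(0)=\alpha(S_0)$ under $\P^0$.

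Taking $\E^0$ of the right-hand side, I would condition on $K_0$, invoke the conditional independence of $\alpha(S_0)$ and $S_1-S_0$ given $K_0$ established above to factor the expectation, and use $\P^0(K_0=i)=p_i$ (the i.i.d.\ Bernoulli property of $\{K_n\}$), obtaining
\[
\E^0\!\int_0^{S_1-S_0} e^{-s\alpha(t)}\,dt = \frac{1}{s}\sum_{i=0}^1 \E^0(e^{-s\alpha(0)}|K_0=i)\big(1-\E^0(e^{-s(S_1-S_0)}|K_0=i)\big)p_i .
\]
The last step is to relabel the inter-departure factor from $K_0$ to $K_{-1}$: by time-homogeneity of the Markov-renewal sequence the pair $(S_1-S_0,K_0)$ has the same law as $(S_0-S_{-1},K_{-1})$ — each inter-departure time being governed by the state at the left endpoint of its interval — whence $\E^0(e^{-s(S_1-S_0)}|K_0=i)=\E^0(e^{-s(S_0-S_{-1})}|K_{-1}=i)$. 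Combined with the inversion identity this is exactly \eqref{palm}.

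The step that needs the most care is the first one: justifying the inversion identity and, in tandem, the claim that no successful departure falls in the open interval $(S_0,S_1)$, which is precisely what makes the within-cycle age profile linear and the integral elementary. Once the sawtooth is pinned down the remaining algebra is routine; the only other points worth checking explicitly are that the Palm mark law $\P^0(K_0=i)$ coincides with $p_i$ and that the $K_0\!\to\!K_{-1}$ relabeling is legitimate, both consequences of stationarity and of the independence of the $\{K_n\}$.
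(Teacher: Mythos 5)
Your proposal is correct and follows essentially the same route as the paper's proof: Palm inversion over a departure cycle, the sawtooth identity $\alpha(t)=\alpha(S_0)+(t-S_0)$ on $[S_0,S_1)$ making the cycle integral elementary, then conditioning on $K_0$ and factoring via the conditional independence of $\alpha(S_0)$ and $S_1-S_0$ given $K_0$. The only difference is that you make explicit the relabeling $\E^0(e^{-s(S_1-S_0)}|K_0=i)=\E^0(e^{-s(S_0-S_{-1})}|K_{-1}=i)$, which the paper leaves implicit in passing from its final displayed line to \eqref{palm}; that is a worthwhile clarification, not a deviation.
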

\begin{proof}
The Palm inversion formula
$\E e^{-s \alpha(0)} =$\\
$ { \E^0 \int_{S_0}^{S_1}e^{-s \alpha(t)}dt}/{\E^0 (S_1-S_0)}
$
has numerator
\begin{align*}
 & = \E^0 \int_{S_0}^{S_1}e^{-s (\alpha(S_0)+t-S_0)}dt\\
  &  = \E^0 \int_{S_0}^{S_1}
	\sum_{i=0}^1 (e^{-s\alpha(0)}|K_0=i) \E^0(e^{-s(t-S_0)}| K_0=i) p_i dt
\\
  &= \sum_{i=0}^1 \E^0(e^{-s\alpha(0)}|K_0=i) 
	\E^0\Bigg(\frac{1-e^{-s(S_1-S_{0})}}{s}\Bigg| K_0=i\Bigg) p_i
\end{align*}
\end{proof}

To calculate the terms in \eqref{palm} we need to follow the steps
outlined in the lemmas below.
\ifLETTER  
Define $\E^0_{i,j}(\cdot)=\E^0(\cdot|K_{-1}=i,K_0=j)$.
\fi
Let $\hat{F}_0(s)= \hat{G}(s+\lambda)/\hat{G}(\lambda)$ and
$\hat{J}(s)  = \int_0^\infty e^{-s y} dJ(y)$ where
$J(y)= 1$  for $y\geq \theta$  and, for $0\le y<\theta,$
\begin{align*}
dJ(y)  &  = q^{-1} \lambda e^{-\lambda y}(1-G(y)) dy
\end{align*}

\begin{lemma}
\ifLETTER
\begin{align}
& \E^0_{0,0} e^{-s(S_0-S_{-1})} =
\frac{\lambda}{\lambda+s}\cdot\frac{1-q}{1-q\hat{J}(s)}
\hat{F}_0(s)
\label{expS_00}
\\
& \E^0_{0,0} e^{-s \alpha(S_0)}  = \hat{F}_0(s) 
\label{expA_00} 
\end{align}
\else
\begin{align}
& \E^0(e^{-s(S_0-S_{-1})} \mid K_{-1}=0,K_0=0)  =
\frac{\lambda}{\lambda+s}\cdot\frac{1-q}{1-q\hat{J}(s)}
\hat{F}_0(s)
\label{expS_00}
\\
& \E^0(e^{-s \alpha(S_0)} \mid K_{-1}=0,K_0=0) = \hat{F}_0(s) 
\label{expA_00} 
\end{align}
\fi
\end{lemma}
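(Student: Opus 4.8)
The two identities describe the regeneration cycle $[S_{-1},S_0)$ that starts from an empty system ($K_{-1}=0$) and ends with a successful departure again leaving the system empty ($K_0=0$). The plan is to break this cycle into its physical pieces and to exploit the fact that, since arrivals are Poisson, successive \emph{service attempts} are i.i.d.\ in outcome: each attempt is independently a preemption (probability $q$), a successful service leaving the system empty (probability $\hat G(\lambda)=\P(\tau>\sigma)$), or a successful service leaving one queued message (probability $1-q-\hat G(\lambda)$).

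I would establish \eqref{expA_00} first. The message departing at $S_0$ is the one whose uninterrupted successful service has just completed; it entered cell $1$ at the start of that service period, and because $K_0=0$ no message arrived during it, i.e.\ $\tau>\sigma$ for that attempt. Consequently $\alpha(S_0)$ is exactly the service time $\sigma$ of this message, conditioned on $\{\tau>\sigma\}$. Weighting the law of $\sigma$ by $e^{-\lambda s}$ and renormalising by $\hat G(\lambda)$ gives $\E^0(e^{-s\alpha(S_0)}\mid K_{-1}=0,K_0=0)=\E(e^{-s\sigma}\mid\tau>\sigma)=\hat G(s+\lambda)/\hat G(\lambda)=\hat F_0(s)$, which is \eqref{expA_00}.

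For \eqref{expS_00} I would write $S_0-S_{-1}=W+\sum_{i=1}^N Y_i+T$, where $W$ is the initial idle period, $Y_1,\dots,Y_N$ are the durations of the preempted attempts, $N$ is their number, and $T$ is the final successful service time. Here $W$ is exponential with rate $\lambda$, so $\E e^{-sW}=\lambda/(\lambda+s)$. A preempted attempt has duration $\tau$ on the event $\{\tau\le\theta\wedge\sigma\}$, so each $Y_i$ carries the conditional law $\P(\tau\le y\mid\tau\le\theta\wedge\sigma)$; a short computation identifies this with the distribution $J$, noting that its total mass is $1$ because $\P(\tau\le\theta\wedge\sigma)=q$. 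Hence $\E e^{-sY_i}=\hat J(s)$, while $T$ is $\sigma$ given $\tau>\sigma$ with transform $\hat F_0(s)$ as above. Conditioning on $N$ and using the mutual independence of $W$, the $Y_i$, and $T$, the compound geometric sum has transform $\sum_{n\ge0}(1-q)q^n\hat J(s)^n=(1-q)/(1-q\hat J(s))$, and multiplying the three independent factors yields the right-hand side of \eqref{expS_00}.

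The step I expect to be the main obstacle is justifying that, conditional on the full event $\{K_{-1}=0,K_0=0\}$, the preemption count $N$ is still geometric with parameter $q$ and is independent of $T$: the conditioning forces the terminating attempt to be a success that leaves the system empty, not merely a success, and one must check that this does not bias the number of preceding preemptions. This follows from the i.i.d.\ outcome structure, since $\P(N=n,\ \text{terminating attempt leaves system empty})=q^n\hat G(\lambda)$, so dividing by $\P(K_0=0\mid K_{-1}=0)=\hat G(\lambda)/(1-q)$ gives $\P(N=n\mid K_0=0)=(1-q)q^n$, independent of the terminal service time. Making this factorisation precise, together with confirming that $J$ is a genuine probability law, is where the care lies; the remaining steps are routine transform manipulations.
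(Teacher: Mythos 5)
Your proof is correct and follows essentially the same route as the paper: the cycle $[S_{-1},S_0)$ is decomposed into an exponential idle period, a geometric number $N$ of preempted attempts with law $J$, and a terminal successful service with law $F_0$, whose transforms are multiplied, while $\alpha(S_0)$ is identified with $(\sigma\mid\tau>\sigma)$. The only difference is that you explicitly verify that conditioning on $\{K_0=0\}$ leaves $N$ geometric$(q)$ and independent of the terminal service time, a point the paper asserts implicitly by declaring the summands independent.
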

\begin{proof}
See Figure \ref{fig:0} at 
top 
and consider the interval $[S_{-1},S_0)$.
Let $\tau_{-1} $ be  first message arrival time in this interval minus 
$S_{-1}$, so that $\tau_{-1}\sim\exp(\lambda)$ by the memoryless property.
Note that there is a geometric  number $N$ of 
interarrival times
 each of which is  smaller than {\em both} $\theta$ and 
the associated service time; $N=2$ in Figure \ref{fig:0}.
The probability of such unsuccessful service is
\[
\P(\tau<\theta\wedge\sigma) = q.
\]
So, $\P(N=k)=(1-q) q^k$ for $k=0,1,2,...$.  
Let $Y \eqdist (\tau | \tau < \theta\wedge \sigma)$ 
so that $\P(Y\leq y)=J(y)$.
Finally, let
$\tau_0\sim\exp(\lambda)$ be the duration between the arrival 
time (green dot) of the message that departs at $S_0$ and
the next arrival time.
The service time 
(from the green dot to $S_0$)
$\sigma_0$  is independent of $\tau_0$.
Considering the  prior $N$ unsuccessful service completions,
we are given that $\tau_0 >\sigma_0$ or 
$\tau_0>\theta$. Given $K_0=0$, $\tau_0>\sigma_0$.
Let $X_0 \eqdist (\sigma_0|\tau_0>\sigma_0)$ which has distibution
\[
dF_0(x)  =  \hat{G}(\lambda)^{-1} e^{-\lambda x} dG(x), \quad x > 0,
\]
with $\E e^{-s X_0}=\hat{F}_0(s)$. So,   
$$(S_0-S_{-1}\mid K_{-1}=0,K_0=0) \eqdist 
\tau_{-1} + \sum_{n=1}^N Y_n + 
X_0,$$
a sum of independent terms with $Y_n \eqdist Y$.
Also
$\alpha(S_0) \eqdist X_0$ in this case. 
\end{proof}

Define
\begin{align*}
dF_1(x) & = \frac{(e^{-\lambda\theta}-e^{-\lambda x}) dG(x)}
{\int_\theta^\infty (e^{-\lambda\theta}-e^{-\lambda z}) dG(z)} ,
\quad x>\theta\\
\hat F_1(s) & =  \int_\theta^\infty e^{-sx}  dF_1(x).
\end{align*}

\begin{lemma}
\ifLETTER
\begin{align}
& \E^0_{0,1} e^{-s(S_0-S_{-1})} =
\frac{\lambda}{\lambda+s}\cdot \frac{1-q}{1-q\hat{J}(s)}\hat{F}_1(s)
\label{expS_01}
\\
& \E^0_{0,1} e^{-s \alpha(S_0)}
 = \int_\theta^\infty e^{-sx}  dF_1(x) =: \hat{F}_1(s).
\label{expA_01}
\end{align}
\else
\begin{align}
& \E^0(e^{-s(S_0-S_{-1})} \mid K_{-1}=0,K_0=1)  =
\frac{\lambda}{\lambda+s}\cdot \frac{1-q}{1-q\hat{J}(s)}\hat{F}_1(s)
\label{expS_01}
\\
& \E^0(e^{-s \alpha(S_0)} \mid K_{-1}=0,K_0=1) = \hat{F}_1(s).
\label{expA_01}
\end{align}
\fi
\end{lemma}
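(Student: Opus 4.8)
The plan is to mirror the proof of the preceding lemma, exploiting the fact that a contiguous busy cycle $[S_{-1},S_0)$ started from an empty system ($K_{-1}=0$) decomposes, by the memoryless property of the Poisson arrivals and the i.i.d.\ structure of successive service attempts, into three independent pieces: an initial idle period $\tau_{-1}\sim\exp(\lambda)$, a geometric number $N$ of preempted (unsuccessful) service attempts each of duration $Y_n\eqdist Y$ with law $J$, and a single concluding successful service attempt. See Figure \ref{fig:0} at bottom. The only thing that changes between the $K_0=0$ case of the previous lemma and the present $K_0=1$ case is the law of this last, successful, attempt; everything before it is unaffected by the value of $K_0$.

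First I would argue that conditioning on $K_0$ does not alter the distribution of the prefix $\tau_{-1}+\sum_{n=1}^N Y_n$. Since each service attempt uses a fresh service time and, by memorylessness, a fresh $\exp(\lambda)$ interarrival clock, the attempts are i.i.d.; $N$ is the number of failures before the first success and is therefore geometric with $\P(N=k)=(1-q)q^k$, independent of the details of the successful attempt. Because $K_0$ is a functional of the successful attempt alone (it records whether an arrival landed in the post-$\theta$ window of that attempt), we get $N\indep K_0$ and $(Y_n)\indep K_0$, so the prefix contributes the same transform $\frac{\lambda}{\lambda+s}\cdot\frac{1-q}{1-q\hat{J}(s)}$ as before.

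Next I would identify the law of the successful attempt given $K_0=1$. A successful attempt with service time $\sigma_0$ leaves a message in cell $2$ (so $K_0=1$) exactly when $\sigma_0>\theta$ and the governing interarrival $\tau_0$ satisfies $\theta<\tau_0<\sigma_0$, i.e.\ an arrival occurs after the preemptible window but before completion. Conditioning on $\sigma_0=x>\theta$, this event has probability $e^{-\lambda\theta}-e^{-\lambda x}$, so the conditional law of $\sigma_0$ is precisely $dF_1$, with transform $\hat{F}_1(s)$. Since the message departing at $S_0$ arrived $\sigma_0$ time units earlier, $\alpha(S_0)\eqdist\sigma_0\sim F_1$, which gives \eqref{expA_01}. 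Writing $(S_0-S_{-1}\mid K_{-1}=0,K_0=1)\eqdist\tau_{-1}+\sum_{n=1}^N Y_n+X_1$ with $X_1\sim F_1$ and multiplying the independent transforms yields \eqref{expS_01}.

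The main obstacle is the conditional-independence bookkeeping of the second paragraph: one must check carefully that conditioning on $K_0=1$ reweights only the final successful attempt and leaves both the geometric count $N$ and the preempted durations $Y_n$ untouched, so that the prefix transform is inherited verbatim from the previous lemma. Once that factorization is justified, the identification of $F_1$ and the multiplication of transforms are routine and strictly parallel to the $K_0=0$ computation.
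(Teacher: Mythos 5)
Your proof is correct and takes essentially the same approach as the paper: the paper's own proof simply observes that the only difference from the preceding ($K_{-1}=0,K_0=0$) case is that the successful service now satisfies $\sigma_0>\tau_0>\theta$, so that its conditional law is $F_1$, and everything else carries over verbatim. You merely make explicit the conditional-independence bookkeeping (that conditioning on $K_0=1$ reweights only the final successful attempt, leaving $\tau_{-1}$, $N$, and the $Y_n$ untouched), which the paper leaves implicit.
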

\begin{proof}
See Figure \ref{fig:0} at bottom. 
The difference between this and
the previous case is that here $\sigma_0>\tau_0>\theta$. 
So, the distribution of  $X_1 \eqdist (\sigma_0\mid \sigma_0>\tau_0>\theta)$ is
$dF_1(x)$, as defined above.
\end{proof}

Define 
\begin{align*}
dH(v)   & = \frac{\lambda e^{-\lambda v}(1-G(v+\theta)) dv}{
 \int_\theta^\infty (1-e^{-\lambda(x-\theta)}) dG(x)}, \quad v \ge 0,\\
\hat{H}(s) & = \int_0^\infty e^{-sv} dH(v).
\end{align*}

\begin{lemma}
\ifLETTER
\begin{align}
& \E^0_{1,0} e^{-s(S_0-S_{-1})}  =
\frac{1-q}{1-q\hat{J}(s)}\hat{F}_0(s)
\label{expS_10}
\\
&\E^0_{1,0} e^{-s \alpha(S_0)} 
 =   q\hat{F}_0(s) +(1-q)\hat{H}(s)\times\eqref{expS_10}
  \label{expA_10}
\end{align}
\else
\begin{align}
& \E^0(e^{-s(S_0-S_{-1})} \mid K_{-1}=1,K_0=0) =
\frac{1-q}{1-q\hat{J}(s)}\hat{F}_0(s)
\label{expS_10}
\\
& \E^0(e^{-s \alpha(S_0)} \mid K_{-1}=1,K_0=0) =  
 q\hat{F}_0(s) +(1-q)\hat{H}(s)\times\eqref{expS_10}
  \label{expA_10}
\end{align}
\fi
\end{lemma}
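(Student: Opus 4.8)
The plan is to analyze the contiguous interval $[S_{-1},S_0)$ exactly as in Lemma~1, the only structural change being that, since $K_{-1}=1$, the message $M$ that was queued in cell~2 is already present at $S_{-1}$ and enters service immediately. There is therefore no initial idle wait for a first arrival, which is precisely why the factor $\lambda/(\lambda+s)$ of \eqref{expS_00} disappears. First I would argue that, conditionally on $K_0=0$, the duration $S_0-S_{-1}$ again decomposes as $\sum_{n=1}^N Y_n + X_0$, where $N$ counts the preempted (unsuccessful) services in the chain started by $M$, each preemption contributing an independent $Y\sim J$, and $X_0\sim F_0$ is the final successful service that leaves the system empty. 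Since the attempts are i.i.d.\ with preemption probability $q$ and the event $\{K_0=0\}$ only fixes the \emph{type} of the terminating success, $N$ stays geometric with $\P(N=k)=(1-q)q^k$; the generating-function computation of Lemma~1 then yields \eqref{expS_10}.

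The genuinely new ingredient is the extra age carried by $M$ on account of its sojourn in cell~2, and pinning down its law is the main obstacle. I would obtain it by looking back at the successful service that ended at $S_{-1}$ and created the queue: on that service the arrivals after time $\theta$ form a Poisson process, and $M$ is the \emph{last} of them, since queue push-out retains the most recent arrival. Writing $W$ for the elapsed time between $M$'s arrival and $S_{-1}$, a time-reversal of the arrival stream shows that $W$ is an $\exp(\lambda)$ variable truncated by the residual service $\sigma-\theta$; marginalizing over $\sigma$ (conditioned to exceed $\theta$ and to admit at least one post-$\theta$ arrival) produces exactly the density $dH$, so that $\E^0(e^{-sW})=\hat H(s)$. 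The delicate points here are the backward-recurrence identity and checking that the normalizing constant reduces to $\int_\theta^\infty(1-e^{-\lambda(x-\theta)})\,dG(x)$.

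Finally I would assemble $\alpha(S_0)$ by splitting on the fate of $M$ at the first arrival after $S_{-1}$. With probability $q$ the message $M$ is preempted and discarded; the message that actually departs at $S_0$ then entered service fresh, so its age is its own service time and $\alpha(S_0)\eqdist X_0$, contributing $q\hat F_0(s)$. With the complementary probability $1-q$ the message $M$ is not preempted (every subsequent arrival lands in cell~2), $M$ itself departs at $S_0$, and because $M$ arrived a time $W$ before $S_{-1}$ we get the clean identity $\alpha(S_0)=W+(S_0-S_{-1})$ with $W$ independent of the interval; this branch contributes $(1-q)\hat H(s)$ times the interval transform, giving \eqref{expA_10}. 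The weights $q$ and $1-q$ are not the raw first-arrival probabilities but their values after conditioning on $\{K_0=0\}$, which the Bernoulli independence of the $K_n$ supplies. The step I would scrutinize most carefully is this last one: verifying which interval transform multiplies $\hat H(s)$, i.e.\ whether it is the full conditional law of \eqref{expS_10} or, because the non-preempted branch forces $N=0$, merely $\hat F_0(s)$.
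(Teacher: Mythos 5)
Your proposal follows the paper's own proof essentially step for step: \eqref{expS_10} is obtained from the geometric number $N$ of preempted services with the initial exponential factor of \eqref{expS_00} deleted; the law $H$ of your $W$ (the paper's $V$) comes from looking backward from $S_{-1}$, giving $W\eqdist(\tau\mid \tau<\sigma-\theta,\ \sigma>\theta)$; and $\alpha(S_0)$ is split according to whether the cell-2 message $M$ is preempted, the weights $q$ and $1-q$ surviving the conditioning on $\{K_0=0\}$ because $N$ is independent of the type of the terminating success. All of that matches the paper and is correct.

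The step you single out for scrutiny is the crux, and your suspicion is the right one. Writing $\tau_1$ for the first inter-arrival time after $S_{-1}$ and $\sigma_M$ for $M$'s service time, the non-preempted branch together with $K_0=0$ is exactly the event $\{\tau_1>\sigma_M\}$: on it $N=0$ and the interval consists of the single service of $M$ conditioned to contain no arrivals, so $S_0-S_{-1}\eqdist X_0\sim F_0$ on that event. The factor multiplying $(1-q)\hat H(s)$ must therefore be $\hat F_0(s)$, \emph{not} the full transform \eqref{expS_10}: the latter is the law of $S_0-S_{-1}$ averaged over $N$, and inserting it inside the $\{N=0\}$ branch replaces a conditional law by an unconditional one (the preemption chain encoded by the factor $(1-q)/(1-q\hat J(s))$ is empty on this event). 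Your completed argument thus yields
\[
\E^0\big(e^{-s\alpha(S_0)}\mid K_{-1}=1,K_0=0\big)=q\,\hat F_0(s)+(1-q)\,\hat H(s)\,\hat F_0(s),
\]
which differs from \eqref{expA_10} as printed whenever $q>0$ and $s>0$. Note that the paper's own prose proof performs exactly your split ($N>0$ behaves like the $K_{-1}=0,K_0=0$ case; on $N=0$, $\alpha(S_0)=V+(S_0-S_{-1})$ with $S_0-S_{-1}$ reducing there to a single conditioned service), so the discrepancy lies in the displayed equation \eqref{expA_10} (and likewise \eqref{expA_11}), which substitutes \eqref{expS_10} for the conditional transform of $S_0-S_{-1}$ given $N=0$; it is not a flaw in your reasoning. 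Accordingly, you should not close your argument with ``giving \eqref{expA_10}'': this (correct) route leads to the expression displayed above, and \eqref{expA_10} as stated cannot be recovered from it.
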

\begin{proof}
See Figure \ref{fig:10}.
\begin{figure}[h]
\begin{center}
\includegraphics[width=1.6in]{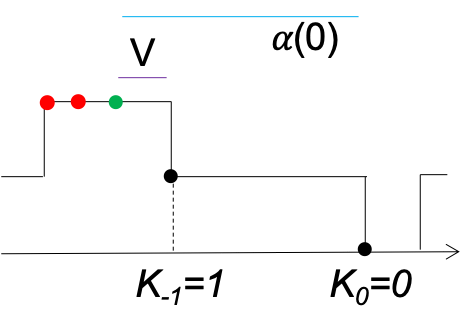}
\hspace*{5mm}
\includegraphics[width=2in]{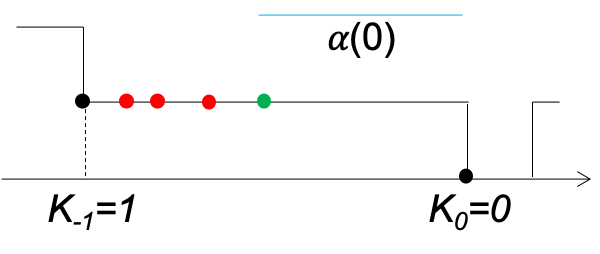}
\captionof{figure}{$K_{-1}=1,K_0=0$ cases.}
\label{fig:10}.
\end{center}
\end{figure}
For $\alpha(S_0)$ 
there are two subcases depending of 
whether there are initial unsuccessful arrivals in the interval
$[S_{-1},S_0)$, i.e., whether $N>0$.
When $N>0$ (Figure \ref{fig:10} bottom), 
this case is like 
when $K_{-1}=0,K_0=0$.
Otherwise (Figure \ref{fig:10} top), 
\[
\alpha(S_0) =V+ S_{0}-S_{-1},
\]
where $V$ is the duration between $S_{-1}$ 
the last arrival before $S_{-1}$ 
 (green dot), and $V$ and $S_0-S_1$ are independent given $K_{-1}=1$. 
Starting from $S_{-1}$, look backward in time until the first Poisson point appears (green dot) and condition on the event that this occurs 
at least $\theta$ units of time before the service time ends.
Thus, $V \eqdist (\tau|\tau<\sigma-\theta,\sigma>\theta)$  with distribution
$H$.
Also, $S_0-S_{-1}$
is distributed as for the  case where $K_{-1}=0,K_0=0$
except 
the first interarrival time is absent.
\end{proof}

\begin{lemma}
\ifLETTER
\begin{align}
&\E^0_{1,1} e^{-s(S_0-S_{-1})}  =
\frac{1-q}{1-q\hat{J}(s)}\hat{F}_1(s)
\label{expS_11}\\
&\E^0_{1,1} e^{-s \alpha(S_0)}
=   q\hat{F}_1(s) +(1-q)\hat{H}(s)\times\eqref{expS_11}
  \label{expA_11}
\end{align}
\else
\begin{align}
&\E^0(e^{-s(S_0-S_{-1})} \mid K_{-1}=1,K_0=1) =
\frac{1-q}{1-q\hat{J}(s)}\hat{F}_1(s)
\label{expS_11}\\
&\E^0(e^{-s \alpha(S_0)} \mid K_{-1}=1,K_0=1)
=   q\hat{F}_1(s) +(1-q)\hat{H}(s)\times\eqref{expS_11}
  \label{expA_11}
\end{align}
\fi
\end{lemma}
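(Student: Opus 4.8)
The plan is to reproduce the argument behind the previous lemma (equations \eqref{expS_10}--\eqref{expA_10}) almost verbatim, the only change being that the final, successful service is now conditioned to leave a waiting message ($K_0=1$) and so has law $F_1$ rather than $F_0$; concretely this replaces every occurrence of $\hat F_0$ by $\hat F_1$. First I would establish the sojourn transform \eqref{expS_11}. Because $K_{-1}=1$, the message queued at $S_{-1}$ enters service at the instant $S_{-1}$, so there is no initial interarrival delay and hence, unlike \eqref{expS_01}, no factor $\lambda/(\lambda+s)$. Exactly as in the first lemma, $(S_0-S_{-1}\mid K_{-1}=1,K_0=1)\eqdist \sum_{n=1}^N Y_n + X_1$, a sum of independent terms: a geometric number $N$ of preempted-service durations $Y_n\eqdist Y\sim J$, with $\P(N=k)=(1-q)q^k$, followed by the successful service $X_1\sim F_1$ (the conditioning $K_0=1$ forcing the last service to be of type $X_1$). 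Summing the geometric series in $\hat J(s)$ produces the factor $(1-q)/(1-q\hat J(s))$, and independence of the last term multiplies in $\hat F_1(s)$, which is \eqref{expS_11}.

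For the age \eqref{expA_11} I would split, as in the previous lemma and in Figure~\ref{fig:11}, according to whether the queued message is itself preempted, i.e. on $\{N>0\}$ versus $\{N=0\}$, events of probability $q$ and $1-q$ that are governed by earlier service periods and hence independent of the law of the final successful service. On $\{N>0\}$ the message departing at $S_0$ arrived inside $[S_{-1},S_0)$ and began service on arrival, so its age at departure is its own successful service time; this reduces to the $K_{-1}=0,K_0=1$ situation and contributes $q\,\hat F_1(s)$. On $\{N=0\}$ the queued message is the one that departs; it arrived a time $V$ before $S_{-1}$, so that $\alpha(S_0)=V+(S_0-S_{-1})$. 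Looking backward from $S_{-1}$ to the last Poisson point, and using that this arrival is the one pushed into the queue, identifies $V\eqdist(\tau\mid \tau<\sigma-\theta,\ \sigma>\theta)\sim H$ exactly as for \eqref{expA_10}; moreover $V$ is conditionally independent of the forward sojourn $S_0-S_{-1}$ given $K_{-1}=1$. By the same device used in the previous lemma, this contributes $(1-q)\,\hat H(s)$ times the sojourn transform \eqref{expS_11}, and adding the two pieces yields \eqref{expA_11}.

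The routine half is \eqref{expS_11}, which is a copy of the first lemma with $X_0$ replaced by $X_1$. The step that needs care, and the main obstacle, is the age decomposition on $\{N=0\}$: one must argue cleanly that the queued message's arrival sits at a backward recurrence time $V\sim H$ of the previous service period, that $V$ is conditionally independent of the current sojourn $S_0-S_{-1}$ given $K_{-1}=1$, and that the preempted-versus-surviving dichotomy of the queued message is precisely the event $\{N>0\}$ carrying the weights $q$ and $1-q$. Since both the independence and the identification of $H$ are inherited intact from the proof of \eqref{expA_10}, once they are invoked the transform \eqref{expA_11} follows by multiplying the independent Laplace transforms and reading off the sojourn factor from \eqref{expS_11}.
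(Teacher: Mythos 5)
Your argument is, in outline, exactly the paper's: the paper proves this lemma by observing that $S_0-S_{-1}$ combines the two previous cases (no initial factor $\lambda/(\lambda+s)$ since the queued message starts service at $S_{-1}$, and $\hat F_1$ in place of $\hat F_0$ since $K_0=1$), and that $\alpha(S_0)$ is handled as in the $K_{-1}=1,K_0=0$ case with \eqref{expA_01} replacing \eqref{expA_00}. Your derivation of \eqref{expS_11} is correct and identical to this.

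In the age computation, however, there is a genuine gap --- one that your write-up makes explicit, and which it shares with the paper's own proof of \eqref{expA_10}. You decompose on $\{N=0\}$ (probability $1-q$) versus $\{N>0\}$ (probability $q$), and on $\{N=0\}$ you write $\alpha(S_0)=V+(S_0-S_{-1})$ and then multiply $\hat H(s)$ by the \emph{unconditional} sojourn transform \eqref{expS_11}. But conditional on $\{N=0\}$ there are no preempted service periods at all: the sojourn is exactly the single successful service, $S_0-S_{-1}=X_1$, whose conditional transform is $\hat F_1(s)$, not $\frac{1-q}{1-q\hat J(s)}\hat F_1(s)$. The factor $\frac{1-q}{1-q\hat J(s)}=\sum_{k\ge 0}(1-q)q^k\hat J(s)^k$ is an average over all values of $N$, so inserting it inside the event $\{N=0\}$, which already carries the weight $1-q$, counts the preempted periods twice. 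Carried out consistently, your own decomposition yields
\begin{equation*}
\E^0\big(e^{-s\alpha(S_0)}\mid K_{-1}=1,K_0=1\big)=q\hat F_1(s)+(1-q)\hat H(s)\hat F_1(s),
\end{equation*}
which coincides with the right-hand side of \eqref{expA_11} only where $\hat J(s)=1$ (e.g.\ at $s=0$). So either the formula \eqref{expA_11} should be corrected as above, or a separate justification must be supplied for pairing $\hat H(s)$ with the full transform \eqref{expS_11}; as written, that step does not follow from the conditioning you set up, even though it faithfully reproduces the paper's own reasoning.
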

\begin{proof}
See Figure \ref{fig:11}. 
For $S_0-S_{-1}$, this case is a combination of the previous two cases,
and for $\alpha(S_0)$ follow 
the previous case except use
\eqref{expA_01} instead of \eqref{expA_00}.
\end{proof}

\noindent\textbf{The final stage:}
The formulas obtained in the lemmas above must now be substituted into \eqref{palm} 
as follows:
\begin{align}
&\E^0(e^{-s(S_0-S_{-1})}|K_{-1}=0)  = 
\eqref{expS_01}\times p_1 + \eqref{expS_00}\times p_0 \label{ES_0}\\
&\E^0(e^{-s(S_0-S_{-1})}|K_{-1}=1)  = 
\eqref{expS_11}\times p_1 + \eqref{expS_10}\times p_0  \label{ES_1}.
\end{align}
\ifLETTER
\[
\E^0 (S_1-S_{0}) 
=-\frac{d(p_1\times\eqref{ES_1}+p_0\times\eqref{ES_0})}{ds}\Bigg|_{s=0}.
\]
\else
So,
\[
\E^0 (S_1-S_{0})
=-\frac{d}{ds} \E^0 e^{-s (S_0-S_{-1})} \Bigg|_{s=0}
=-\frac{d}{ds}  (p_1\times\eqref{ES_1}+p_0\times\eqref{ES_0})\Bigg|_{s=0}.
\]
\fi
\ifLETTER
\else
Moreover,
\fi
\begin{align*}
\E^0 (e^{-s\alpha(0)}|K_0=0) & =
p_0\times \eqref{expA_00} + p_1\times \eqref{expA_10} 
\\
\E^0 (e^{-s\alpha(0)}|K_0=1) & = 
p_0\times \eqref{expA_01} + p_1\times \eqref{expA_11}.
\end{align*}

\section{Numerical Results for Stationary Mean AoI, $\E\alpha(0)$}

The stationary mean AoI can be obtained from \eqref{palm}
and numerically minimized over $\theta$ for a given set of model parameters
for an arbitrary service-time distribution $G$. 
For example, for exponential service times, $dG(x) = \mu e^{-\mu x} dx$,
and $\theta=\infty$ ($\PP_1$) achieves minimal 
$\E \alpha(0)=\mu^{-1}+\lambda^{-1}$ \cite{Kosta17,Inoue19,KKZ19}.
For another example,
for constant service time  $dG(x)=\delta_{1/\mu}(x)dx$,
the $\PP_2$ system  (i.e., when $\theta=0$)  has
\ifLETTER
\begin{align*} 
	\E \alpha(0) &=
\frac{1}{\mu} \left( \left(1 -e^{-\rho}\right)\left(1+\frac{1}{\rho}\right)  \right.\\
	& ~~~~~+ \left. \frac{1}{\rho^2+ \rho e^{-\rho}} \, \left(e^{-\rho}+ \rho e^{-\rho} + \frac{1}{2}\rho^2 \right) \right),
\end{align*}
\else
\begin{equation*} 
\E \alpha(0) \;=\; \frac{1}{\mu} \left( \left(1 -e^{-\rho}\right)\left(1+\frac{1}{\rho}\right) + \frac{1}{\rho^2+ \rho e^{-\rho}} \, \left(e^{-\rho}+ \rho e^{-\rho} + \frac{1}{2}\rho^2 \right) \right);
\end{equation*}
\fi
which is generally smaller than $\PP_1$ 
(and also smaller than $\BB_1$ for sufficiently small traffic loads 
\cite{Inoue19,KKZ19,KKZ21a}).

\begin{figure}[h]
  \begin{center}
\ifLETTER
    \includegraphics[width=\columnwidth]{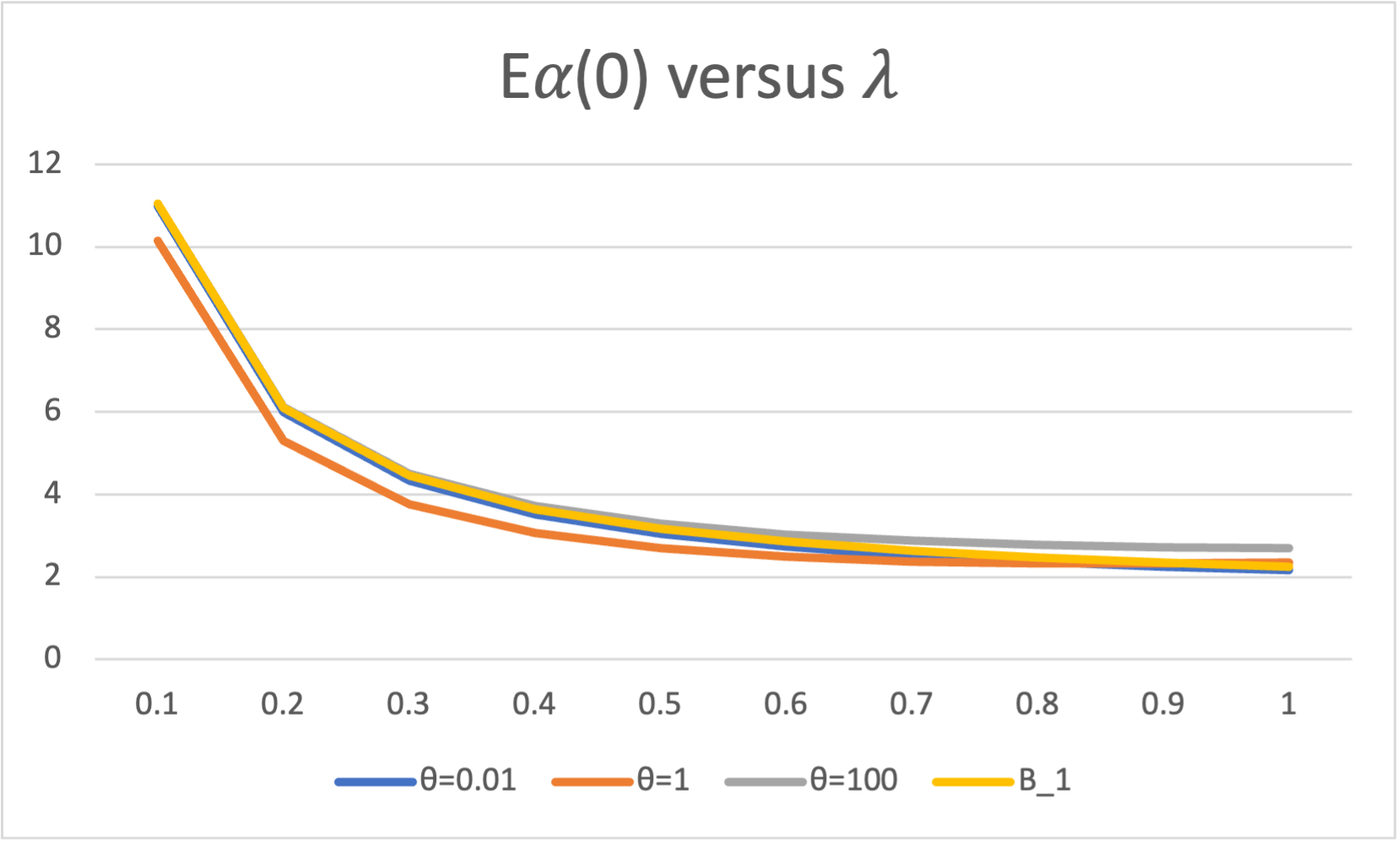}
\else
    \includegraphics[width=4in]{numer/dynamic-mix.png}
\fi
\captionof{figure}{$\E \alpha(0)$ for service-time
distribution which is a  mixture of 
deterministic and exponential with mean $\mu^{-1}=1$.}
  \label{fig:mix}
  \end{center}
\end{figure}
Now consider the  mixture service-time distribution defined by
$dG(x) = \frac12 \delta_1(x)dx +\frac12 e^{-x}dx$ for $x>0$, and so
$\mu=1$. From Figure \ref{fig:mix}, which was obtained
numerically using \eqref{palm}, we see that  in some cases
$\E\alpha(0)$ is {\em not} minimized at either $\theta=0$
or $\theta=\infty$.  That is, in some cases (specifically,
traffic loads $\leq 0.8$), the $\PP_{2,\theta}$ policy
for a finite $\theta>0$ ($\theta =1$ in Figure \ref{fig:mix})
has lower mean AoI  than $\PP_1$, $\PP_2$ and $\BB_1$. 

\section{Summary and Future Work}

In this paper, we analyzed the Age of Information
performance of the $\PP_{2,\theta}$ queueing
policy in steady state under the M/GI model
(Poisson message arrivals and i.i.d. service times). 
$\PP_{2,\theta}$ can hold at most two messages in the 
system and dynamically employs either queue push-out 
or service preemption depending on whether the system is
full and on the service-time-so-far of the in-service message.
We numerically demonstrated
that it has lower mean AoI than 
previously studied $\PP_2$, $\PP_1$ and $\BB_1$ 
policies for a service time distribution that
is a mixture of deterministic and exponential.

Using the Markov embedding approach we employed,
one can also analyze variations of the
$\PP_{2,\theta}$ policy in the same way. 
For example, consider
the policy where a message arriving at time $t$ 
does not preempt the in-service message (but joins the queue in cell 2)
if $0<\uu(t) < \theta$, otherwise the message captures the server.

In future work, 
we will consider the problem of 
determining a queuing and service policy that is optimal with respect to
an AoI based quality-of-service metric 
when interarrival time and service time models
are specified.
In practice, this question may subject to
technological constraints which may preclude use of,
e.g., service preemption and/or queue push-out.
Also, we will consider the problem of 
deriving the AoI distribution for specific 
interarrival time and service time models of
the non-renewal type.

\bibliographystyle{plain}
\bibliography{../../latex/AoI,../../latex/stochastic}

\end{document}